\tikzstyle{post}=[->,shorten >=1pt, >=stealth,thin]
\pgfplotsset{compat=1.18}
\theoremstyle{definition} \newtheorem{observation}[theorem]{Observation}
\begin{document}
\title{Cooperative Task Execution in \\Multi-Agent Systems }

%
\author{Karishma\inst{1}\orcidID{0000-0003-3842-7408} \and
Shrisha Rao\inst{2}\orcidID{0000-0003-0625-5103}}
%
%
\institute{International Institute of Information Technology, Bangalore, Karnataka 560100
\email{\{karishma,srao\}@iiitb.ac.in}}
\maketitle              

\begin{abstract}
We propose a multi-agent system that enables groups of agents to collaborate and work autonomously to execute tasks. Groups can work in a decentralized manner and can adapt to dynamic changes in the environment. Groups of agents solve assigned tasks by exploring the solution space cooperatively based on the highest reward first. The tasks have a dependency structure associated with them. We rigorously evaluated the performance of the system and the individual group performance using centralized and decentralized control approaches for task distribution. Based on the results, the centralized approach is more efficient for systems with a less-dependent system $G_{18}$ (a well-known program graph that contains $18$ nodes with few links), while the decentralized approach performs better for systems with a highly-dependent system $G_{40}$ (a program graph that contains $40$ highly interlinked nodes). We also evaluated task allocation to groups that do not have interdependence. Our findings reveal that there was significantly less difference in the number of tasks allocated to each group in a less-dependent system than in a highly-dependent one. The experimental results showed that a large number of small-size cooperative groups of agents unequivocally improved the system's performance compared to a small number of large-size cooperative groups of agents. Therefore, it is essential to identify the optimal group size for a system to enhance its performance.

\keywords {Task execution \and Cooperative execution strategy (CES) \and Task dependencies \and Cooperative agents.}
\end{abstract}
%
%
\section{INTRODUCTION}


In a multi-agent system, a group is composed of individual agents who work collectively towards common goals or objectives. These agents may possess varying degrees of autonomy and can interact with each other and their environment to coordinate their actions. Groups are essential in multi-agent systems as they enable agents to collaborate, coordinate, and accomplish complex tasks or objectives that may be beyond the capabilities of individual agents. Some common approaches to group agents for task execution in a multi-agent system are hierarchical structure, task-oriented approaches, role-based approaches, learning-based approaches, cluster-based approaches, etc. 

Multi-agent systems involve multiple agents that work collaboratively to achieve specific goals within a group. MAS can be employed for dynamic task allocation within a group and overcome a distributed system's complex task allocation problem. Agents can negotiate for resource allocation based on the current resource availability~\cite{Shen_9732210}. Within a group, agents can work together to maximize resource utilization, guaranteeing effective use of computational resources like virtual machines, amazon EC2 instances, and storage~\cite{Gudu_7816893}.

Task allocation can be done using either a centralized or distributed approach~\cite{macarthur2011multi,zheng2008reaction}. Task scheduling is performed using the centralized approach without considering where tasks or requirements change over time~\cite{shehory1998methods,zheng2008reaction}. A centralized or distributed component must schedule the $m$ number of functions between the $n$ number of agents, where $m$ can be higher than $n$. In that case, multiple scheduling rounds are required, and each task will be accomplished on its scheduled turn~\cite{ramchurn2010coalition}. A few existing algorithms for task scheduling in a distributed system are SWARM-based approach~\cite{du2010task,ghassemi2019decentralized}, negotiation approach~\cite{luo2010multi,wang2013community}, and distributed constraint optimization problems~\cite{ramchurn2010decentralized}, etc.




Li et al.~\cite{Jiaoyang2021} presented a framework for solving multi-agent path finding, addressing path collisions. However, unlike our approach, their method does not require exploring solutions where multiple tasks may have similar outcomes.

To maximize system performance, Gerkey and Matarić ~\cite{Brian2004} survey solutions for multi-robot task allocation, taking several variables into account such as resource allocation, resource cost, coordination method, etc.  The work is specific to the allocation of spatial tasks to robots, unlike our model which presents an approach for solution space exploration in the context of agents that may not be robots.

Karishma and Rao~\cite{karishma2023cooperative} proposed an approach where a set of agents worked together to solve a set of tasks. During the solution space exploration, agents can ask for help from an oracle within a certain budget if needed. In situations where a system involves a large number of tasks and agents, the centralized entity responsible for task assignment, solution validation, and reward allocation can become a bottleneck for the entire system. We have extended this work to resolve this bottleneck by dividing the agents into multiple groups. 


In this work, we address the fundamental problem of solving distributed tasks by groups of agents. Agents explore a solution space to execute tasks. And agents collect the inference data (get the tasks for which same solution is applicable) in case of successful task validation. And they can also infer a new solution by using previously explored solutions. If a task is part of inference data, then the solution space exploration phase is not required. We have evaluated system performance with centralized and decentralized control for task assignment at the group level. Individual groups work in parallel to enhance multiple agent's efficiency and effectiveness in task execution. Tasks have dependencies within and across groups that should be executed first. We formulate and answer the following questions during experiments:
\begin{enumerate}
\item If there is a choice between centralized and decentralized control, which should be preferred and why?
\item Will the increase in speed of the agents for solution exploration increase the system performance?
\item Will the groups get tasks distributed equally in both less-dependent systems (LDS) and highly-dependent systems (HDS)?
\end{enumerate}

The simulation results show that dividing agents into smaller groups improves the system's performance  (see Table~\ref{tab:systemExplTimeG40chap4} and Table~\ref{tab:systemExplTimeG40VaringGroupschap4}). We have evaluated task allocation to groups that don't have inter-dependency, and we have observed that the tasks were almost evenly distributed for LDS but not in HDS (see Figure~\ref{fig:task_distribution_chap4}). $G_{18}$ and $G_{40}$ are well-known program graphs and widely used~\cite{Agrawal2014}. We have used $G_{18}$ and $G_{40}$ as examples of LDS and HDS respectively. Evaluation of centralized and decentralized control approaches shows that the centralized approach performs better for a system with less number of tasks, whereas the decentralized approach performs better for a large-scale system (see Figure~\ref{fig:comparision_centVSDecet_chap4} and Table~\ref{tab:group_WT_Centralized_and_Decentralized_chap4}). Evaluation of task distribution shows that LDS performs better when tasks assigned to a group are not dependent on the other group's task( independent set of tasks), whereas HDS performs better when the inter-dependency exists with the other group's tasks (see Table~\ref{tab:group_WT_Decentralized_chap4}).


The mathematical results prove the transitivity of knowledge within the group due to the sharing of gained knowledge between the agents (see Theorem~\ref{theorem1_Transitivity}). It also formulates the expected waiting time $E[W]$ due to dependencies between the tasks, which equals $\Theta(mkp^k)$ (see Theorem~\ref{theorem_dep_impact}). This is in line with the results, which are shown in Table~\ref{tab:group_WT_Decentralized_chap4}. Mathematical results to identify the optimal group size show that system performance is better with the small size of a large number of groups over the large size of a small number of groups based on the expected system execution time (see Theorem~\ref{theorem_grp_size} and Table~\ref{tab:systemExplTimeG40chap4}).

The rest of the paper is structured as follows. Section~\ref{CES} provides the details about the system model, and it also explains the cooperative execution strategy for groups of agents. Section~\ref{results} presents the experimental results obtained through simulation, and it also describes the mathematical results. 

\section{COOPERATIVE EXECUTION STRATEGY FOR GROUPS OF AGENTS} \label{CES}
We present a model for a multi-agent system, which has a set of cooperative agents working on inter-dependent tasks to explore solution space and execute the tasks. We are paving the way for an efficient and effective system design by evaluating both centralized and decentralized task allocation approaches.

\subsection{System Model}\label{sm}
We present a multi-agent system model that has requirements to execute and implement a set of tasks by a set of groups. Tasks have some dependency structure among them. A group consists of cooperative agents who actively share their knowledge within the group.
\begin{figure}
  \centering
  \includegraphics[scale=0.30]{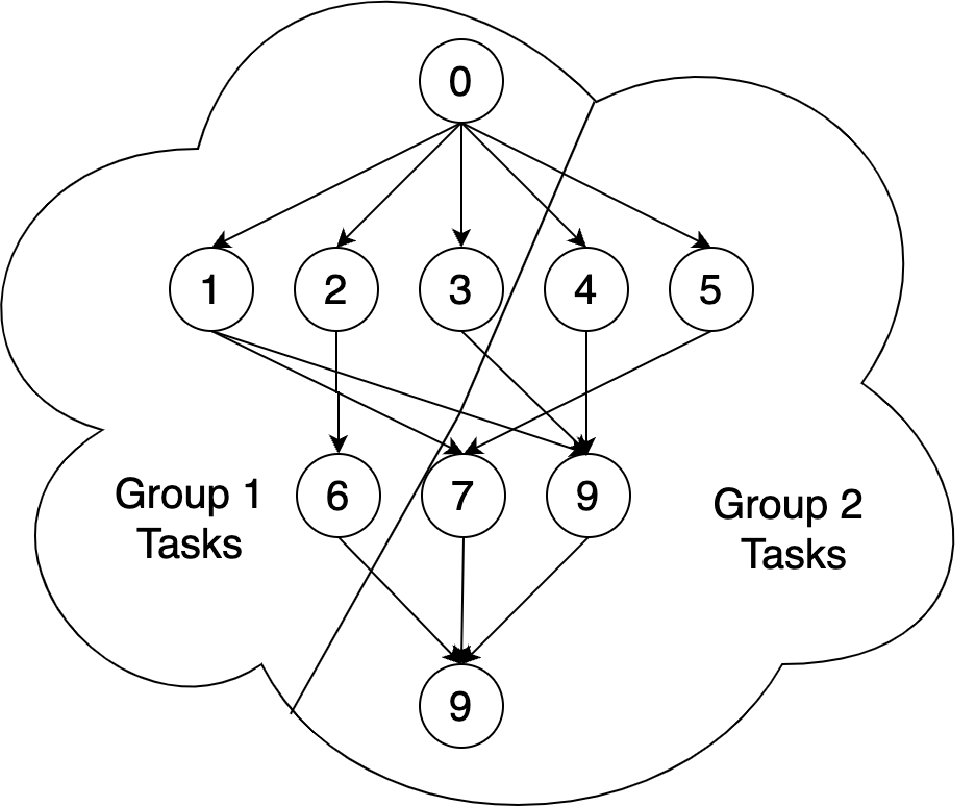}
   \caption{\label{fig:tasksForGroups_G10} Task Dependency Graph $G_{10}$ for groups.}
   \vspace{-5pt}
\end{figure}

We have a set of tasks, each with a fixed reward and a set of dependent tasks. We divide the agents into multiple groups to distribute the workload among the groups instead of controlling from a centralized entity. We partition set of tasks into multiple disjoint subsets for different groups of agents, allowing them to be executed independently. For this, we split the rewards and dependencies accordingly. We make sure that no duplicity of tasks exists among the groups. A group $g_i$ can get a set of tasks that may be dependent on tasks that belong to group $g_j$, similar to Figure~\ref{fig:tasksForGroups_G10}. A program graph $G_{10}$~\cite{Bian_10233785} consists of $10$ tasks and illustrates the task distribution to two different groups and has task inter-dependencies. A task can only be scheduled for execution once all the tasks in its dependency list are executed. We prioritize the tasks based on the reward factor. Agents within the group share the explored solution and inference data within the group but not outside the group. The advantage of not sharing knowledge outside the group is that it reduces the communication overhead.


We consider a standard system model of $n$ agents, $l$ groups, and $m$ tasks. We'll represent the information about tasks, solutions, groups, and the agent's knowledge using a mathematical model. Let:

\begin{itemize}

\item \(\mathcal{A}\), \(\mathcal{G}\), \(\mathcal{T}\), and \(\mathcal{S}\) are the sets
  of agents, groups, tasks, and solutions, respectively.
\item \(a_i \in \mathcal{A}\) denotes the \(i\)-th agent.
\item \(g_k \in \mathcal{G}\) denotes the \(k\)-th group.
\item \(t_j \in \mathcal{T}\) represents the \(j\)-th task.
\item \(s_j \in \mathcal{S} \) denotes the solution corresponding to task \(t_j\).

\end{itemize}

We have a set of rewards $\mathcal{R}$ and a set of dependencies $\mathcal{D}$ associated with our task set $\mathcal{T}$, where $d_j \subseteq T \setminus \{t_j\}$ consists of a set of tasks on which $t_j$ is dependent. To explore the solution space, we have divided $n$ agents into $l$ groups, each containing $n/l$ agents. $\mathcal{A}_{subset}$ represents a subset of agents that is allocated to group $g_k$. We have created $l$ subsets of the task set $\mathcal{T}$, which we represent as $\mathcal{T}_{subset}$. We have also created respective subsets of rewards $\mathcal{R}_{subset}$ and dependencies $\mathcal{D}_{subset}$.

Let $\mathcal{S}$ represent the solution space, which is defined as the set of all possible solutions $s$ for the given set of tasks $\mathcal{T}$. Formally:
\begin{equation*} \mathcal{S} = {s_j \mid s_j \text{ is a valid solution for some task } t_j \in \mathcal{T}} \end{equation*}
In other words, for each task $t_j$ in the set of tasks $\mathcal{T}$, there exists a set of valid solutions $s_j$. The overall solution space $\mathcal{S}$ is the union of all these individual solution sets:
\begin{equation*} \mathcal{S} = \bigcup_{t_j \in \mathcal{T}} s_j \end{equation*}

Intuitively, the solution space $\mathcal{S}$ encompasses all potential solutions that the agents can explore for the given tasks. Each solution $s_j$ within this space corresponds to a specific task $t_j$. The agents navigate this solution space searching for valid solutions to execute the tasks.

A task assignment function $\lambda$ assigns a subset of tasks from $\mathcal{T}_{subset}$ to group $g_k$. It is denoted as $\lambda(g_k):\mathcal{A}\rightarrow2^{\mathcal{T}_{subset}}$. The function ensures that task assignments to different groups are non-overlapping, meaning that $\lambda(g_k) \cap \lambda(g_h) = \emptyset$ if $k \neq h$. $\mu(g_k)$ represents the set of tasks accomplished by $g_k$ such that $\mu(g_k) \subseteq \lambda(g_k)$.We have used a set \(K(a_i)\) to represent the knowledge of an agent \(a_i\), where \(K(a_i)\) contains \(\{(t_j, s_j) \mid t_j\ \in \mathcal{T}\) and \(s_j \in \mathcal{S}\)\}.

Every task \(t_j\) has a corresponding solution \(s_j\). If an agent \(a_i\) knows \(s_j\) for some task \(t_j\), then \(a_i\) also knows the inference data of \(t_j\). If \(t_k\) belongs to the inference data of \(t_j\), then the same solution \(s_j\) is valid for task \(t_k\) as well, which is indicated by \(t_k \sim t_j\).

\subsection{CES Algorithm} \label{CES_algo}

This section explains the strategy for exploring the solution space by groups of agents. In a distributed system, all agents are divided into different groups to work on a set of tasks. Each set of tasks is further divided into subsets, and each group of agents is assigned a specific subset of tasks to work on. The agents in each group work together to explore solutions for their assigned tasks. Once they have found a solution, they execute the task. Coordination among agents in the same group is often necessary to share the gained knowledge during the task execution process.

Algorithm~\ref{alg:centralizedControlAtIndividualGroupLevelChap4} describes the solution space exploration at the group level. In this approach, a group is assigned a set of agents who will execute a set of tasks by considering the respective rewards and dependencies. The group has a centralized control that takes care of task assignment and validation. Later, we enhanced the system by adding the decentralized task distribution where agents pull tasks from the set of available tasks within the group.

\begin{algorithm}[tbh]
\caption{Solution Space Exploration at Group Level Algorithm.}
\label{alg:centralizedControlAtIndividualGroupLevelChap4}
\textbf{Input}: $\mathcal{T}_{subset}$: A subset of tasks, $\mathcal{R}_{subset}$: A subset of respective rewards, $\mathcal{D}_{subset}$: A subset of respective dependencies, $\mathcal{A}_{subset}$: A subset of agents\\
\textbf{Output}: Share knowledge with all the agents within the group
\begin{algorithmic}[1] 
\STATE $\mathcal{A}_{avail}$ $ \gets getAvailAgents()$ \label{centralizedControlAtIndividualGroupLevelChap4_marker}
\STATE $\mathcal{T}_{avail} \gets \mathit{getAvailTasks(\mathcal{T}_{subset}, \mathcal{R}_{subset}, \mathcal{D}_{subset}, \mathcal{A}_{avail} )}$ 
\STATE{// Assign the tasks to available agents within the group}
\STATE $\mathit{taskAssignment(\mathcal{T}_{avail}, \mathcal{A}_{avail})}$
\WHILE{true}
\STATE {// On receive event listener for solution validation from an agent $a_{i}$}
\STATE $\mathit{validateSolution(t_j, s_j)}$
\STATE{// Allocate the reward based on the validation result of the solution for $t_j$}
\STATE $ \mathit{allocateReward( a_{i}, \mathcal{T}_{subset}, \mathcal{R}_{subset} )} $
\IF{isRewarded}
\STATE{// Remove the dependencies from the dependent task on $t_j$}
\STATE $\mathit{updateDependencies(t_{j})}$ 
\FOR{\textbf{each} $a_i \in \mathcal{A}_{subset}$}
\STATE $shareKnowledge(t_j, s_j, \mathcal{A}_{subset})$
\ENDFOR
\ENDIF
\STATE \textbf{go to} \ref{centralizedControlAtIndividualGroupLevelChap4_marker}
\ENDWHILE
\end{algorithmic}
\end{algorithm}

Algorithm~\ref{alg:centralizedControlAtIndividualGroupLevelChap4} accepts a subset of tasks $\mathcal{T}_{subset}$ with respective dependencies $\mathcal{D}_{subset}$ and rewards $\mathcal{R}_{subset}$. Solution space exploration is performed by $\mathcal{A}_{subset}$ agents. As a result, the knowledge gained by all the $\mathcal{A}_{subset}$ agents is shared among them. In algorithm~\ref{alg:centralizedControlAtIndividualGroupLevelChap4}, in line 1, get the available agents to execute the tasks. Initially, all the agents are available, but it's possible that a few agents are working on solution space exploration in the next iteration. In line 2, get the available tasks for the available agents. In line 4, it assigns unique tasks to available agents. If the available tasks are less than the available agents, then a few agents do not get any task assigned. In line 7, $\mathit{validateSolution(t_j, s_j)}$ validates the explored solution $s_j$ for a task $t_j$. In line 9, $\mathit{allocateReward( a_{i}, \mathcal{T}_{subset}, \mathcal{R}_{subset} )}$ allocates a reward for the explored solution $s_j$ if it is valid; otherwise, there is no reward for task $t_j$. In line 12, dependencies on the task $t_j$ are removed from the dependency set for all the remaining tasks only if the reward is allocated for the task $t_j$. In line 14, $shareKnowledge(t_j, s_j, \mathcal{A}_{subset})$ shares the gained knowledge by an agent $a_i$ to all the agents who are part of set $\mathcal{A}_{subset}$.

\section{RESULTS}\label{results}

We present mathematical and simulation results for a cooperative execution strategy to execute tasks by groups of agents.

\subsection{Mathematical Results}\label{mr}

Here are some essential mathematical results for gained knowledge, comparing the impact of various sizes of the groups and evaluating the expected waiting time of a task when the dependency graph is associated among the tasks.

\begin{theorem}[Transitivity on Knowledge] \label{theorem1_Transitivity}If \(t_k \sim t_l\), agent \(a_i\) knows the solution for task \(t_k\):\((t_k, s_k) \in  K(a_i)\) and two agents, \(a_i\) and \(a_j\) belong to the same group: \({a_i, a_j} \in g_p\), then \((t_l, s_k) \in  K(a_j)\)
\end{theorem}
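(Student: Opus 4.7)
The plan is to chain together two ingredients already in place in Section~\ref{sm}: the semantics of the equivalence $t_k \sim t_l$ (which guarantees that the same solution is valid for both tasks), and the cooperative sharing step embedded in Algorithm~\ref{alg:centralizedControlAtIndividualGroupLevelChap4} (which guarantees that anything one group member learns becomes known to every other member of the same group). The overall structure will be a two-step derivation: first promote $a_i$'s knowledge from $t_k$ to $t_l$ via the inference relation, then promote it from $a_i$ to $a_j$ via intra-group sharing.

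First I would unpack the hypothesis $(t_k, s_k) \in K(a_i)$ together with the semantics stated just before the theorem, namely that if an agent knows a solution $s_j$ for some $t_j$ then it also knows the inference data of $t_j$, and that $t_k \sim t_j$ means $s_j$ is a valid solution for $t_k$ as well. Applied to the pair $(t_k, t_l)$ with $t_k \sim t_l$, this yields $(t_l, s_k) \in K(a_i)$: agent $a_i$'s possession of $(t_k, s_k)$ immediately entails possession of $(t_l, s_k)$, because the same witness $s_k$ solves both tasks.

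Next I would invoke intra-group sharing. By the hypothesis $\{a_i, a_j\} \in g_p$, both agents lie in the same $\mathcal{A}_{subset}$. Line~14 of Algorithm~\ref{alg:centralizedControlAtIndividualGroupLevelChap4} calls $shareKnowledge(t_j, s_j, \mathcal{A}_{subset})$ after every successful validation, propagating every newly known $(t,s)$ pair to all members of $\mathcal{A}_{subset}$. Applying this to the pair $(t_l, s_k)$ now residing in $K(a_i)$ yields $(t_l, s_k) \in K(a_j)$, which is the desired conclusion.

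The only real obstacle is stylistic rather than mathematical: the definition of $\sim$ in the excerpt is phrased one-sidedly ("if $t_k$ belongs to the inference data of $t_j$, then $s_j$ is valid for $t_k$"), while the theorem uses the relation symmetrically. I would handle this by noting explicitly that $\sim$ is symmetric since it records that two tasks admit a common solution, so either orientation $t_k \sim t_l$ or $t_l \sim t_k$ yields the same inference. A second minor point to address in writing is that the sharing step only fires at the moment of validation; I would therefore quantify the claim over the state of $K(\cdot)$ after the round in which $(t_k, s_k)$ was first validated, so that the \emph{shareKnowledge} call has indeed been executed before the conclusion is asserted.
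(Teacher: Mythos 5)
The paper states Theorem~\ref{theorem1_Transitivity} without supplying any proof, so there is no in-paper argument to compare against; your two-step derivation (first $(t_k,s_k)\in K(a_i)$ plus the inference-data semantics of $t_k \sim t_l$ gives $(t_l,s_k)\in K(a_i)$, then the $shareKnowledge$ call in Algorithm~\ref{alg:centralizedControlAtIndividualGroupLevelChap4} propagates this to every $a_j$ in the same group) is exactly the justification the surrounding text evidently intends. Your two caveats --- that $\sim$ is defined one-sidedly and that sharing only occurs at validation time, so the claim should be read as holding after the round in which $(t_k,s_k)$ is validated and shared --- are genuine imprecisions in the paper's statement that your proof correctly patches rather than gaps in your argument.
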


\begin{observation} [Impact of Task Dependencies]\label{observation_task_dependencies}
Task distribution is uneven in both less-dependent systems (LDS) and highly-dependent systems (HDS) (see Figure~\ref{fig:task_distribution_chap4}).
\end{observation}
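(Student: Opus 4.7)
The plan is to argue structurally why the empirical pattern shown in Figure~\ref{fig:task_distribution_chap4} is essentially unavoidable once a dependency DAG is overlaid on a fixed partition of $\mathcal{T}$, and then separately to quantify why the unevenness is much milder for LDS than for HDS. Because this statement is flagged as an observation (and is primarily supported by simulation), I would aim for a short combinatorial justification rather than a fully quantitative theorem.

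First, I would formalize ``uneven'' as the claim that the multiset $\{|\mu(g_k)| : g_k \in \mathcal{G}\}$ is not constant after a fixed execution horizon. Since the partition $\{\lambda(g_k)\}$ is computed once at setup (as described in Section~\ref{sm}) and is never rebalanced, any initial imbalance in $|\mathcal{T}_{subset}|$ across groups propagates directly to $\mu(g_k)$; this alone is a sufficient baseline witness of unevenness. I would then point out that the dependency DAG can only magnify the imbalance, not correct it, because $\lambda(g_k) \cap \lambda(g_h) = \emptyset$ rules out re-routing blocked tasks to idle groups.

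Second, I would show that inter-group dependencies amplify the baseline imbalance. For each inter-group edge $(t_j, t_k)$ with $t_j \in \lambda(g_p)$ and $t_k \in \lambda(g_q)$, group $g_q$ is blocked on $t_k$ until $g_p$ executes $t_j$. Plugging each such blocked task into Theorem~\ref{theorem_dep_impact} yields an expected waiting cost of $\Theta(mkp^k)$ per chain, so the expected deficit of $g_q$ relative to $g_p$ at a fixed horizon scales with the number of inter-group edges incident to $\lambda(g_q)$. Because the partition is fixed in advance and the dependency DAG is arbitrary, there is no reason to expect these incidence counts to be equal across groups, so the variance of $|\mu(g_k)|$ is strictly positive for any generic dependency DAG.

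Third, I would separate the LDS and HDS regimes. In $G_{18}$ the dependency graph is sparse, so only $O(1)$ inter-group edges cross any partition boundary and the blocking penalty contributes an $o(m/l)$ correction to each $|\mu(g_k)|$; groups therefore cluster near $m/l$ and the unevenness is visible but small. In $G_{40}$ the density is much higher, so the number of inter-group edges scales linearly with $m$ and the blocking penalty is a leading-order effect, producing the pronounced spread in Figure~\ref{fig:task_distribution_chap4}. The main obstacle is that cleanly bounding the inter-group edge count requires either a direct inspection of $G_{18}$ and $G_{40}$ or an average-case assumption on the partition; I would handle this by stating the final step as a structural lemma parameterized by inter-group edge density and then noting that $G_{18}$ and $G_{40}$ instantiate the low- and high-density regimes respectively, which is exactly what the figure reports.
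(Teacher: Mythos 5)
The paper offers no proof for this statement: it is presented purely as an empirical observation, with Figure~\ref{fig:task_distribution_chap4} as its entire support. So any structural justification you give is necessarily going beyond the paper, and it is fair to judge yours on whether the mechanism you propose actually explains the data in that figure.

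It does not, and the gap is concrete: the experiment behind Figure~\ref{fig:task_distribution_chap4} allocates tasks to groups \emph{without any inter-group dependency} (the paper states this explicitly: ``We allocate the subset of tasks to different groups without any inter-dependency''), and the plotted quantity is the number of tasks \emph{allocated} to each group, i.e.\ $|\lambda(g_k)|$, not the number completed by a fixed horizon. Your second step --- the amplification argument built on inter-group edges $(t_j, t_k)$ with $t_j \in \lambda(g_p)$, $t_k \in \lambda(g_q)$, and the appeal to Theorem~\ref{theorem_dep_impact} for a blocking penalty --- therefore analyzes a mechanism that is absent by construction in this experiment; there are no such edges, so no group is ever blocked on another, and the waiting-time theorem has nothing to bite on. The actual source of the unevenness is a partitioning constraint, not an execution-time one: requiring each $\lambda(g_k)$ to be closed under dependencies forces dependency-connected clusters of tasks to be co-located in a single group, and this turns the allocation into a bin-packing problem over cluster sizes. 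In $G_{18}$ the clusters are small, so a near-balanced packing exists and the two curves in Figure~\ref{fig:task_distribution_chap4}(a) track each other; in $G_{40}$ the clusters are large and irregular, so no balanced packing exists and the gap in Figure~\ref{fig:task_distribution_chap4}(b) is forced. Your third paragraph gestures at the right dichotomy (sparse versus dense dependency structure) but attributes it to blocking penalties rather than to the co-location constraint, so the argument as written would not survive the observation that the experiment has zero inter-group edges. Your first step (a fixed, never-rebalanced partition propagates any initial imbalance) is fine as far as it goes, but it is the cluster-size argument, not Theorem~\ref{theorem_dep_impact}, that distinguishes LDS from HDS here.
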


\begin{theorem} [Impact of Dependencies on Expected Waiting Time for LDS] \label{theorem_dep_impact}
Consider the same multi-agent system as defined previously with a set of $m$ tasks $\mathcal{T}$ and agents $\mathcal{A}$. Let the dependency graph among tasks have maximum degree $k < m-1$, so each task depends on at most $k$ other tasks. Let $p$ be the probability a dependency is unresolved. Then, the expected waiting time scales as $E[W] = \Theta(mkp^k)$.  
\end{theorem}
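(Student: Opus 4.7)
The plan is to decompose the total expected waiting time by linearity of expectation over individual tasks, then bound each task's contribution in terms of its dependency count and the unresolved-dependency probability $p$. Writing $W = \sum_{j=1}^{m} W_j$ where $W_j$ is the waiting time of task $t_j$, I would work with $E[W] = \sum_{j=1}^m E[W_j]$. The model I would adopt, consistent with the theorem's phrasing, treats each of $t_j$'s $d_j \le k$ prerequisites as independently unresolved at a given moment with probability $p$. The event that $t_j$ remains blocked because all $d_j$ of its predecessors are still unsatisfied then has probability $p^{d_j}$, and conditional on this event the residual wait scales roughly linearly with $d_j$, since in an LDS the dependency chains are short and the $d_j$ predecessors must all be cleared before $t_j$ can run.

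For the upper bound I would show $E[W_j] \le C \cdot d_j \cdot p^{d_j}$ for an absolute constant $C$, combining the blocking probability $p^{d_j}$ with a conditional wait of order $d_j$. Using $d_j \le k$ and noting the worst case is attained at $d_j = k$, summing across the $m$ tasks gives $E[W] = O(m k p^k)$. For the matching lower bound, I would argue that in the LDS-style graph achieving maximum degree $k$, a constant fraction of the tasks attain degree $\Theta(k)$, so that $\sum_j d_j p^{d_j} = \Omega(m k p^k)$, producing the $\Theta$ bound.

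The hard part will be the lower bound, specifically justifying two ingredients: first, that enough tasks actually have degree close to $k$ so that the per-task contribution $k p^k$ is not washed out by many low-degree tasks contributing only $p^{d_j}$ with small $d_j$; and second, that $E[W_j] = \Omega(k p^k)$ rather than merely $\Omega(p^k)$, which requires showing that a task blocked by $k$ unresolved dependencies must wait time proportional to $k$ before the last one clears. I would address the first point by treating the maximum-degree hypothesis together with the LDS assumption as implying the dependency graph is approximately $k$-regular for the tasks of interest, and the second by modeling the individual resolution times of the $k$ dependencies as roughly i.i.d., so that the expected time for the slowest of $k$ to resolve is $\Omega(k)$ by a coupon-collector-type argument. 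Combining the $O$ and $\Omega$ bounds yields $E[W] = \Theta(m k p^k)$, matching the empirical trend reported in Table~\ref{tab:group_WT_Decentralized_chap4}.
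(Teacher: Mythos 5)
Your route is genuinely different from the paper's. The paper defines an indicator $X_i \in \{0,1\}$ for the event that $t_i$ has \emph{at least one} unresolved dependency, so $E[X_i] = 1-(1-p)^k$, sums these to get $E[W] = m\bigl(1-(1-p)^k\bigr)$, and only then reaches $\Theta(mkp^k)$ through a Taylor approximation; the waiting time there is a count of blocked tasks, not a duration. You instead take $W_j$ to be a real-valued wait, declare $t_j$ blocked only when \emph{all} $d_j$ of its predecessors are unresolved (probability $p^{d_j}$ rather than $1-(1-p)^{d_j}$), and multiply by a conditional wait of order $d_j$. These are two different probabilistic models of ``waiting'' that do not give the same answer in general, so your argument is not a reconstruction of the paper's proof but an alternative derivation aimed directly at the $p^k$ factor in the stated bound.

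More importantly, your argument has concrete gaps that the fixes you sketch do not close. First, the upper bound: the function $d \mapsto d\,p^{d}$ is \emph{not} maximized at $d=k$; it is decreasing once $d > 1/\ln(1/p)$, so a task with a single unresolved dependency contributes on the order of $p$, which dwarfs $kp^k$. Hence $\sum_j d_j p^{d_j} = O(mkp^k)$ fails whenever some tasks have small degree, and ``the worst case is attained at $d_j=k$'' is simply false. Second, the lower bound: a maximum degree of $k$ does not imply that a constant fraction of tasks have degree $\Theta(k)$ --- one high-degree vertex suffices to realize the maximum --- so near-$k$-regularity is an added hypothesis, not a consequence of the theorem's assumptions. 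Third, the $\Omega(k)$ conditional wait does not follow from ``the slowest of $k$ i.i.d.\ resolution times'': the expected maximum of $k$ i.i.d.\ light-tailed variables grows like $O(\log k)$ for exponential tails and stays $O(1)$ for bounded support, and the coupon-collector $\Theta(k\log k)$ bound concerns sequential collection, not a parallel maximum. You would get a wait proportional to $k$ only if the $k$ prerequisites are cleared one at a time, which is a further modeling assumption you would need to state explicitly.
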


\begin{proof}
For each task $t_i$, define a binary random variable $X_i$: 
\[ X_i = \begin{cases} 
      1 & \text{if $t_i$ has unresolved dependencies}\\
      0 & \text{otherwise}
   \end{cases}
\]

And the total waiting time is $W = \sum_{i=1}^{m} X_i$. 

Since the maximum degree is $k$, each task has at most $k$ dependencies. By the law of total probability:

\[E[X_i] = 1 - (1-p)^k\]

Therefore, the expected total waiting time is:
\begin{align*} 
E[W] &= \sum_{i=1}^{m} E[X_i] \\
       &= m(1 - e^{k\ln(1-p)})
\end{align*}

Using the Taylor approximation $e^x \approx 1+x$ for small $x$: 
\begin{align*}
E[W] &\approx m(1 - (1+k\ln(1-p))) \\
       &= mk(-p)^k \\
       &= \Theta(mkp^k)
\end{align*}
Thus, the waiting time scales as $\Theta(mkp^k)$ when the maximum dependency degree is $k$.
\end{proof}

\begin{theorem} [Impact of Dependencies on Expected Waiting Time for Fully Connected Graph ]\label{theorem_dep_impact_HDS}
Consider a multi-agent system with a set of agents $\mathcal{A} = \{a_1, \dotsc, a_n\}$ and a set of tasks $\mathcal{T} = \{t_1, \dotsc, t_m\}$ where $|\mathcal{T}|=m$. Let the dependency graph among tasks be fully connected, such that each task $t_i$ depends on all other tasks $t_j$ where $j \neq i$. Let $d = m-1$ be the number of dependencies per task. Further, assume that the probability of any dependency being unresolved is a constant $p \in (0,1)$. Then, the expected waiting time $E[W]$ for an agent to receive an executable task scales as $\Theta(m p^d)$.
\end{theorem}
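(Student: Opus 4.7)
The plan is to mirror the indicator-variable argument used for Theorem~\ref{theorem_dep_impact}, specialized to the fully connected case with $d = m-1$. Because the target scaling $\Theta(m p^d)$ has $p$ raised to the power $d$ rather than to the first power, the indicator $X_i$ must capture the event that \emph{all} $d$ dependencies of task $t_i$ are unresolved, not just that some dependency is unresolved. I would make this modelling choice explicit at the outset so that the exponent in the final bound matches the theorem statement.

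With that convention in place, the core calculation is short. For each task $t_i$, let $X_i = 1$ if every one of its $d$ dependencies is unresolved and $X_i = 0$ otherwise, and set $W = \sum_{i=1}^{m} X_i$. Under the hypothesis that each dependency is independently unresolved with probability $p$, we have $E[X_i] = p^d$. Linearity of expectation then gives
\[
E[W] = \sum_{i=1}^{m} E[X_i] = m\, p^d.
\]
Since $p \in (0,1)$ is a fixed constant and $d = m-1$ is determined by the graph, the exact equality supplies both an upper and a matching lower bound, and the $\Theta(m p^d)$ claim follows with no further asymptotic estimation. This makes the fully connected case cleaner than the LDS case, where the Taylor expansion was needed to recover the leading order.

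The main obstacle is the tension between the independence assumption and the structure of a fully connected dependency graph: an unresolved task propagates symmetrically, so the events ``$t_i$ is fully blocked'' are highly correlated across different indices $i$. I would address this either by invoking the theorem's hypothesis verbatim, in which case the factorisation into $p^d$ is valid for each individual $X_i$ and this is all that the first-moment calculation requires (linearity does not demand independence between different $X_i$), or, if a sharper concentration statement were desired later, by bounding the second moment via Cauchy--Schwarz and verifying that the leading $m p^d$ term still dominates the covariance correction.
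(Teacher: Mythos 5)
The paper states this theorem without providing a proof --- the text moves directly from the theorem statement to the discussion of Table~\ref{tab:group_WT_Decentralized_chap4} --- so there is no in-paper argument to compare yours against line by line. Your proposal is the natural adaptation of the paper's proof of Theorem~\ref{theorem_dep_impact}, and you have correctly identified the one place where a verbatim transplant would fail: the LDS proof defines $X_i$ as the indicator that $t_i$ has \emph{some} unresolved dependency, which in the fully connected case would give $E[W] = m\bigl(1-(1-p)^{m-1}\bigr) \to m$ for fixed $p \in (0,1)$, i.e.\ $\Theta(m)$ rather than $\Theta(mp^d)$. Redefining $X_i$ as the indicator that \emph{all} $d$ dependencies are unresolved is the only reading under which the claimed scaling holds, and with that convention the computation $E[W] = \sum_i E[X_i] = mp^d$ is exact, so no Taylor expansion is needed; your observation that linearity of expectation does not require independence across the different $X_i$ is also correct. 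Two caveats, both inherited from the paper rather than introduced by you: the factorization $E[X_i] = p^d$ requires the $d$ dependency-resolution events for a given task to be mutually independent, which is strictly stronger than the stated hypothesis that each dependency is unresolved with probability $p$ (the paper's own LDS proof makes the same unstated assumption when writing $(1-p)^k$); and identifying ``waiting time'' with the expected count of blocked tasks is a modelling choice the paper also makes, so your proof is consistent with --- and, given that it avoids the questionable asymptotic manipulations in the LDS proof, arguably cleaner than --- the precedent set by Theorem~\ref{theorem_dep_impact}.
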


Table~\ref{tab:group_WT_Decentralized_chap4} presents the total waiting time caused by task dependencies for two systems: one with low dependency and one with high dependency, which is in line with Theorem~\ref{theorem_dep_impact} and Theorem~\ref{theorem_dep_impact_HDS}

\begin{theorem} [Optimal Group Size] \label{theorem_grp_size}
Consider a multi-agent system with $n$ agents of fixed capability, partitioned into $l$ groups, exploring a set of $m$ independent tasks. Let $T(g_k)$ be the random variable denoting the time taken by group $g_k$ to complete its assigned tasks. If the number of groups $l$ increases while keeping $n$ and $m$ fixed, thereby decreasing the group size, then the expected system execution time $E[\max_k T(g_k)]$ decreases.
\end{theorem}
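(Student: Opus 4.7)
The plan is to decompose the group completion time $T(g_k)$ into a deterministic throughput term plus a random fluctuation, and then bound $E[\max_k T(g_k)]$ by exploiting that the $T(g_k)$ across groups are i.i.d., so that the maximum concentrates within a $\sqrt{\log l}$ window of their common mean.

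First, I would fix a natural stochastic model for $T(g_k)$: each group holds $n/l$ agents and $m/l$ independent tasks, and each task requires an i.i.d.\ execution time with mean $\mu$ and bounded variance $\sigma^2$. Standard parallel-machines makespan analysis then yields
\[
E[T(g_k)] = \frac{m\mu}{n} + \varepsilon(n/l,\,m/l),
\]
where $m\mu/n$ is a pure throughput term independent of $l$, and $\varepsilon(\cdot,\cdot)$ captures the coordination and straggler overhead induced by the centralized within-group controller of Algorithm~\ref{alg:centralizedControlAtIndividualGroupLevelChap4}. I would argue that $\varepsilon$ is monotonically decreasing as the group shrinks: a controller over $n/l$ agents performs fewer per-round assignment, validation, and broadcast steps, and a batch of only $m/l$ tasks accumulates less straggler mass before the group drains.

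Second, since groups work on disjoint task subsets and do not share knowledge across groups, the $T(g_1),\dotsc,T(g_l)$ are i.i.d. A sub-Gaussian maximal inequality then gives
\[
E\!\left[\max_k T(g_k)\right] \leq E[T(g_k)] + C\sqrt{\operatorname{Var}(T(g_k))\,\log l}.
\]
A CLT-style estimate on the sum of per-task durations yields $\operatorname{Var}(T(g_k)) = O(\sigma^2 l/m)$, so the maximum's penalty scales only as $O(\sqrt{(l/m)\log l})$. In the relevant regime $l = o(m/\log m)$, this penalty is dominated by the strict decrease of $E[T(g_k)]$ with $l$, and the theorem follows.

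The principal obstacle is formalizing the monotonicity of the overhead $\varepsilon$ in $n/l$, which hinges on the precise semantics of the intra-group scheduler; I would introduce this monotonicity as an explicit modeling hypothesis consistent with Algorithm~\ref{alg:centralizedControlAtIndividualGroupLevelChap4} rather than attempt to derive it from first principles. A secondary subtlety is the boundary case $l = n$, where each group holds a single agent and the parallel-machines analysis degenerates; here one argues directly by treating the system as $n$ independent serial streams, each processing $m/n$ tasks, which minimizes $E[\max_k T(g_k)]$ at the upper end of the range and so closes the monotonicity argument.
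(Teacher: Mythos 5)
There is a genuine gap: the step that carries the entire conclusion is the assumed monotonicity of the overhead term $\varepsilon(n/l,\,m/l)$ in the group size, and you introduce it as a modeling hypothesis rather than proving it. Once the throughput term is written as $m\mu/n$ (independent of $l$, since a group of $n/l$ agents handles $m/l$ tasks), every remaining ingredient of your argument pushes in the \emph{wrong} direction: in the bound $E[\max_k T(g_k)] \leq E[T(g_k)] + C\sqrt{\operatorname{Var}(T(g_k))\log l}$ the penalty grows with $l$ both through the $\log l$ factor and through the variance (your own estimate has $\operatorname{Var}(T(g_k))$ increasing in $l$). So without the hypothesis on $\varepsilon$ your machinery yields an upper bound that \emph{increases} with $l$ and cannot establish the theorem; with the hypothesis, the theorem is essentially being assumed. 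Even granting the hypothesis, you would still need to show that the decrease of $\varepsilon$ quantitatively dominates the $C\sqrt{(l/m)\log l}$ penalty, which requires the functional form of $\varepsilon$ --- exactly what you declined to derive.

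For comparison, the paper's proof takes a different route: it sets $E[T(g_k)] = (m/l)/v$ with $v$ a fixed per-group rate (so the mean genuinely falls as $l$ grows, because the agent count per group is not charged against throughput), bounds $E[\max_k T(g_k)] \leq \sum_k E[T(g_k)] = m/v$, and then appeals to a stochastic-ordering remark about the variance of smaller groups. That argument has its own difficulties (a constant upper bound does not show a decrease, higher variance of maxima ordinarily \emph{raises} the expected maximum, and a variance ordering does not imply $\leq_{st}$), but the structural contrast is instructive: the paper locates the benefit of more groups in the mean term $(m/l)/v$, whereas your decomposition correctly shows that under a work-conserving model the mean term is flat in $l$ and the claimed benefit must come entirely from the unproved overhead/knowledge-sharing effect. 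To close your proof you would need either to adopt the paper's per-group-rate model explicitly or to quantify $\varepsilon$ from the semantics of Algorithm~\ref{alg:centralizedControlAtIndividualGroupLevelChap4}.
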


\begin{proof}
With $m$ independent tasks split evenly between groups, each group gets $m/l$ tasks. Since agents have fixed capabilities, the group completion time $T(g_k)$ is approximately normally distributed according to the central limit theorem, with $E[T(g_k)] = (m/l)/v$ where $v$ is the fixed agent capability parameter.

Additionally, $\text{Var}(T(g_k)) = \frac{\sigma^2}{l_k}$ where $l_k = \frac{n}{l}$ is the group size, and $\sigma^2$ measures variability inherent to the tasks and environment.

Since \(\max(X_1, \dotsc, X_n) \leq X_1 + \dotsc + X_n\), we have: 

\begin{align*}
E\left[\max_k T(g_k)\right] &\leq E\left[\sum_k T(g_k)\right] = lE\left[T(g_k)\right] = \frac{m}{v}
\end{align*}

This expected max group time is constant with respect to changes in $l$. However, increasing groups $l$ reduces group size $l_k$, thereby increasing the variance Var$(T(g_k))$. By properties of distributions of maxima:

$$E\left[\max_k X_k\right] \leq E\left[\max_k Y_k\right] \text{ if } X_k \leq_{st} Y_k \:\: \forall k$$

Smaller groups have a higher variance in completion times. Therefore, moving from fewer groups/larger groups to more groups/smaller groups reduces $E[\max_k T(g_k)]$, the expected system execution time. This demonstrates that smaller groups improve expected performance.
\end{proof}

In our case, increasing the number of groups $l$ shrinks the group size $l_k$, which in turn increases the variance Var$(T(g_k))$ of each group's completion time. Higher variance indicates the distribution is more spread out, meaning group completion time is stochastically greater with smaller groups. Therefore, smaller groups reduce the expected system execution time. Experimental results in Table~\ref{tab:systemExplTimeG40chap4} confirm that system performance is better with a large number of small-size groups instead of a small number of large-size groups.

\subsection{Experimental Results}

The effectiveness of cooperative solutions for the execution of tasks is being tested across different scenarios. These include the distribution of tasks among multiple groups, the time taken by groups to explore solutions, the time taken by the system to explore solutions, variation in the speed of agents, adopting different approaches to task assignment among the agent groups, and the impact of HDS $G_{40}$, and LDS $G_{18}$. We conduct experiments where we generate a random maze with a random target location and vary the maze size. Multiple groups explore solutions on a maximum maze size of $400 \times 400$ in parallel, and our designed model can handle task dependencies to simulate real-time scenarios.

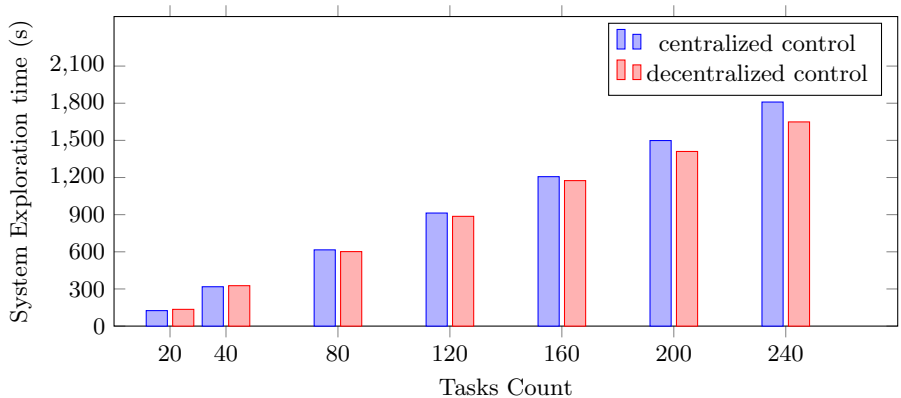
\begin{figure}
\centering
\begin{tikzpicture}
\begin{axis}[
  xlabel={Tasks Count},
  ylabel={System Exploration time (s)},
  ylabel near ticks,
  xmin=0, xmax=280,
  ymin=0, ymax=2500,
  ybar, ymin=0,
  bar width=8pt,
  height=5.7cm,width=12cm,
  xtick={20, 40, 80, 120, 160, 200, 240},
  ytick={0,300,600,900,1200,1500, 1800, 2100},
  ]
  \addplot coordinates { (20, 125.08) (40, 317.9) (80,615.78) (120, 913.07) (160,1206.91) (200,1498.54) (240,1809.6)};
  \addplot coordinates { (20, 135.23) (40, 326.45) (80,601.65) (120,886.43) (160, 1175.03) (200,1410.4) (240,1649.05)};
\legend{centralized control, decentralized control}
\end{axis}
\end{tikzpicture}
 \caption{System performance for centralized and decentralized group approach.}
 \label{fig:comparision_centVSDecet_chap4}
\end{figure}

The graph presented in Figure~\ref{fig:comparision_centVSDecet_chap4} compares two approaches for system performance. The first approach involves a centralized entity that assigns tasks to all agents within the group, while in the second approach, an individual agent selects their own tasks from a set of available tasks. The graph indicates that the first approach is more efficient for a small number of tasks, but the second approach outperforms it for a larger number of tasks in terms of execution time.

\begin{table}[h]
\centering
\caption{Varying number of groups ($5$ agents per group) to explore $500$ tasks.}
{\begin{tabular}{ccc}
\toprule
\textbf{Groups} & \pmb{$ET(g_k)$} & \pmb{$ET$} \\
\midrule
1       & 2458.30  & 2458.50  \\
2       & 1631.64  & 1705.09  \\
4       & 859.50   & 924.36   \\
6       & 635.66   & 686.5    \\
8       & 487.14   & 534.94   \\
10      & 368.32   & 409.46   \\
\bottomrule
\end{tabular}}
\label{tab:systemExplTimeG40chap4}
\end{table}

We have tested the performance of our model designed for groups by varying the number of groups from $1$ to $10$. During the experiment, shown in Table~\ref{tab:systemExplTimeG40chap4}, each group consisted of $5$ agents while maintaining a constant number of tasks at $500$. Each group was assigned tasks that depended on the tasks of the other groups. $ET(g_k)$ denotes the average execution time in seconds taken by a group $g_k$ and $ET$ denotes the overall system's execution time. Our results indicate that increasing groups improves the system's execution time. If we divide tasks among the groups, then it adds the complexity of dividing the tasks among groups and then collecting results. Our result confirms the system's stability in the case of many groups, which does not reduce the system's performance. The system's execution time is always higher than the $ET(g_k)$ because of several additional jobs at the system level, like splitting the tasks among the groups and collecting the results at the end.

\begin{table}[h]
\centering
\caption{Divide $50$ agents into groups to explore $500$ tasks.}
{\begin{tabular}{cc}
\toprule
 \textbf{Groups}& \pmb{$ET$} \\
\midrule
1       &    496.00   \\
2       &    483.70   \\
4       &    467.56   \\
6       &    451.48   \\
8       &    430.02   \\
10      &    409.46   \\
\bottomrule
\end{tabular}}
\label{tab:systemExplTimeG40VaringGroupschap4}
\end{table}

Table~\ref{tab:systemExplTimeG40VaringGroupschap4} presents the system's performance with a fixed number of $50$ agents and an increasing number of groups, ranging from $1$ to $10$. We have distributed the agents almost equally among the groups to explore the solution space for the tasks. The results indicate that although the system consists of $50$ agents, dividing them into smaller groups enhances the overall system performance. There are various factors that affect the system performance in this case, like task distribution among groups and knowledge transitivity, as described in Theorem~\ref{theorem1_Transitivity}.

\begin{table}[h!]
\centering
\caption{Individual group performance and waiting time in a system with centralized and decentralized control at the group level.}
{\begin{tabular}{c ccc ccc }
\toprule
	& \multicolumn{3}{c}{\textbf{Centralized Control}}
	& \multicolumn{3}{c}{\textbf{Decentralized Control}} \\[0.1cm]
\cline{2-7} \\[-0.2cm]
\textbf{Group } & $ \pmb{ET(g_{k})}$ & $ \pmb{|\lambda(g_{k})|} $ & $ \pmb{TWT(g_k)}$& $ \pmb{ET(g_{k})}$ & $ \pmb{|\lambda(g_{k})|} $ & $ \pmb{TWT(g_k)}$ \\[0.1cm]
\midrule
1       & 132.39  & 18 & 34.06      & 125.0  & 18 & 14.43   \\
2       & 128.41  & 18 & 30.34      & 123.26  & 18 & 15.07  \\
3       & 139.02  & 18 & 35.09      & 122.92  & 18 & 11.15  \\
4       & 128.89  & 18 & 27.23      & 121.56  & 18 & 15.84  \\
5       & 141.14  & 18 & 29.67      & 126.04  & 18 & 18.38  \\
\bottomrule
\end{tabular}}
\label{tab:group_WT_Centralized_and_Decentralized_chap4}
\end{table}

Table~\ref{tab:group_WT_Centralized_and_Decentralized_chap4} displays the performance of each group and the total waiting time of agents at the group level. This experiment is conducted on a system with a dependency like the $G_{18}$ program graph and $400\times 400$ maze size. In the centralized control approach, a centralized entity assigns tasks to each group, whereas, in the decentralized approach, each agent pulls the task from the available tasks within the same group. The total waiting time srao($TWT$) of a group $g_k$ is denoted as $TWT(g_k)$, is calculated by adding the waiting time of all agents who belong to the same group. The result shows that $ET(g_{k})$ and $TWT(g_k)$ for the decentralized control approach are better than the centralized control approach.

\begin{table}
\setlength{\tabcolsep}{12pt}
\centering
\caption{Individual group performance and waiting time in a system with decentralized control at group level.}
{\begin{tabular}{cccc}
\toprule
\textbf{System} & \textbf{Dependency}   & $\pmb{ET}$  & $\pmb{TWT}$   \\
\midrule
\textbf{less-}      & inter-dependency  & 631.67                       & 61.56                    \\
\textbf{dependent}  &                   & 628.41                       & 62.18                    \\
                    &                   & 629.81                       & 60.84                    \\
\cline{2-4}
                    & independency      & 619.49                       & 37.85                    \\
                    &                   & 613.88                       & 34.05                    \\
                    &                   & 601.24                       & 32.56                    \\
\cline{1-4}
\textbf{highly-}    & inter-dependency  & 710.80                       & 98.67                    \\
\textbf{dependent}  &                   & 718.15                       & 93.97                    \\
                    &                   & 709.65                       & 96.08                    \\
\cline{2-4}
                    &independency       & 725.80                       & 84.52                    \\
                    &                   & 732.65                       & 76.50                    \\
                    &                   & 728.34                       & 81.53                    \\

\bottomrule
\end{tabular}}
\label{tab:group_WT_Decentralized_chap4}
\end{table}

As per the design, tasks distributed among the groups have inter-dependency, and that can increase the waiting time of an agent to get a task to execute. So, we conducted an experiment on two sets of groups where the first set of groups had a set of tasks without task-dependency across the groups. In contrast, the second set of groups had inter-dependency among the tasks across the groups. Each group had $80$ tasks to implement. Table~\ref{tab:group_WT_Decentralized_chap4} shows the system's execution time and total waiting time ($TWT$) for both LDS and HDS. Total waiting time is the sum of the waiting time of all the agents at all the groups in the system. The result shows that both total waiting time and system execution are always less when comparing inter-dependency and independent tasks among the groups for LDS. For HDS, total waiting time is less, but system execution time is more when comparing inter-dependency and independent tasks among the groups. This is because of the additional step to identify the subsets of tasks that are not dependent on another subset of the tasks but can be dependent on another task that belongs to the same group.

We have done a comparison of the execution time taken by three different groups whose agents have dissimilarities in the speed to explore the solution space. Obtained results suggest that increasing the speed of agents within a group does not necessarily improve system performance linearly due to the dependencies on the other tasks.

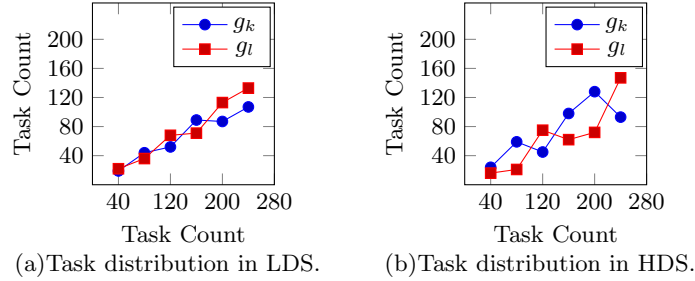
\begin{figure}
\centering
\begin{tikzpicture}
\begin{axis}[
xlabel={Task Count},
  ylabel={Task Count},
  ylabel near ticks,
  xmin=0, xmax=280,
  ymin=0, ymax=250,
  height=4cm,width=4cm,
  xtick={40,  120,  200, 280},
  ytick={40, 80, 120, 160, 200},
  ]
  \addplot coordinates { (40, 19) (80, 44) (120, 52) (160, 89) (200, 87) (240, 107)};
  \addplot coordinates { (40, 22) (80, 36) (120, 68) (160, 71) (200, 113) (240,133)};
\legend{$g_k$,$g_l$}
\end{axis}
 \label{fig:ld_task_distribution_chap4}
\node[below] at (1,-0.8) {(a)Task distribution in LDS.};
\end{tikzpicture}%
\hspace{15pt}
\begin{tikzpicture}
\begin{axis}[
  xlabel={Task Count},
  ylabel={Task Count},
  ylabel near ticks,
  xmin=0, xmax=280,
  ymin=0, ymax=250,
  height=4cm,width=4cm,
  xtick={40, 120, 200, 280},
  ytick={40, 80, 120, 160, 200},
  ]
  \addplot coordinates { (40, 24) (80, 59) (120, 45) (160, 98) (200, 128) (240, 93)};
  \addplot coordinates { (40, 16) (80, 21) (120, 75) (160, 62) (200, 72) (240,147)};
\legend{$g_k$,$g_l$}
\end{axis}
 \label{fig:hd_task_distribution_chap4}
\node[below] at (1,-0.8) {(b)Task distribution in HDS.};
\end{tikzpicture}
 \caption{Task distribution between groups of agents in LDS and HDS.}
 \label{fig:task_distribution_chap4}
\end{figure}

We allocate the subset of tasks to different groups without any inter-dependency. Figure~\ref{fig:task_distribution_chap4}(a) shows the number of tasks allocated to groups $g_k$ and $g_l$ for a LDS. Figure~\ref{fig:task_distribution_chap4}(b) shows the number of tasks allocated to groups $g_k$ and $g_l$ for a HDS. Figure~\ref{fig:task_distribution_chap4}(a) and Figure~\ref{fig:task_distribution_chap4}(b) show that the difference between the number of tasks allocated to two different groups in a LDS is less when compared with a HDS. Experimental results conclude that:
\begin{enumerate}
\item It is better to use a centralized control approach when the number of tasks is small in the system; a decentralized control approach is preferred when the number of tasks is huge (Figure~\ref{fig:comparision_centVSDecet_chap4}).
\item A large number of small-size cooperative groups of agents improves the system's performance when compared with a small number of large-size cooperative groups of agents (Table~\ref{tab:systemExplTimeG40chap4}).
\item Increasing the speed of agents in the groups improves the system performance up to a certain point due to inter-dependencies on the other group's tasks.
\item Due to the dependency, tasks are not evenly distributed for both LDS and HDS (Figure~\ref{fig:task_distribution_chap4}).
\item System performance is better in LDS when groups get an independent set of tasks (no task dependency on other group's tasks), whereas system performance in HDS is better when groups have an inter-dependency of tasks among groups (Table~\ref{tab:group_WT_Decentralized_chap4}).
\end{enumerate}

\section{CONCLUSIONS}\label{conclusion}

After dividing the agents into multiple groups, we investigated the system performance and distributed several jobs, like task assignment, solution validation, reward allocation, etc., to groups. We evaluated the system performance and individual group performance with the centralized and decentralized control approaches for task distribution. In this case, agents share knowledge within the respective group, which reduces the communication overhead. We have also evaluated task allocation to groups that don't have interdependence, and we have observed that the difference in the number of tasks allocated to each group is less in a LDS compared with a HDS. Varying group size analysis shows that a large number of small-size groups performs better when compared with a small number of large-size groups. This result will be beneficial when the system has a requirement to identify the optimal group size.











\end{document}